\newcommand*{\fancyrefexlabelprefix}{ex}
\newcommand*{\frefexname}{\text{example}}
\newcommand*{\Frefexname}{\text{Example}}
\newcommand*{\fancyrefdeflabelprefix}{def}
\newcommand*{\frefdefname}{\text{definition}}
\newcommand*{\Frefdefname}{\text{Definition}}
\newcommand*{\fancyrefthmlabelprefix}{thm}
\newcommand*{\frefthmname}{\text{theorem}}
\newcommand*{\Frefthmname}{\text{Theorem}}
\newcommand*{\fancyrefremlabelprefix}{rem}
\newcommand*{\frefremname}{\text{remark}}
\newcommand*{\Frefremname}{\text{Remark}}
\newcommand*{\fancyreflemlabelprefix}{lem}
\newcommand*{\freflemname}{\text{lemma}}
\newcommand*{\Freflemname}{\text{Lemma}}
\newcommand*{\fancyrefsubseclabelprefix}{subsec}
\newcommand*{\frefsubsecname}{\text{subsection}}
\newcommand*{\Frefsubsecname}{\text{Subsection}}
\newcommand*{\fancyrefcorlabelprefix}{cor}
\newcommand*{\frefcorname}{\text{corollary}}
\newcommand*{\Frefcorname}{\text{Corollary}}
\newcommand*{\fancyrefsubsubseclabelprefix}{subsubsec}
\newcommand*{\frefsubsubsecname}{\text{subsubsection}}
\newcommand*{\Frefsubsubsecname}{\text{Subsubsection}}
\newcommand*{\fancyrefalgolabelprefix}{algo}
\newcommand*{\frefalgoname}{\text{algorithm}}
\newcommand*{\Frefalgoname}{\text{Algorithm}}
\newcommand*{\fancyrefproplabelprefix}{prop}
\newcommand*{\frefpropname}{\text{proposition}}
\newcommand*{\Frefpropname}{\text{Proposition}}
\newtheorem{counter}{Counter}
\theoremstyle{definition}
\newtheorem{definition}[counter]{Definition}
\theoremstyle{plain}
\newtheorem{lemma}[counter]{Lemma}
\newtheorem{theorem}[counter]{Theorem}
\theoremstyle{remark}
\let\oldtextbf\textbf
\renewcommand{\textbf}[1]{\oldtextbf{\boldmath #1}}
\newcommand\Prop{\ensuremath{\mathsf{Prop}}\xspace}
\newcommand\RatS{\ensuremath{\mathsf{S}}\xspace}
\newcommand\Nat{\ensuremath{\mathbb{N}}\xspace}
\newcommand\Tm{\ensuremath{\mathsf{Tm}}\xspace}
\newcommand\LP{\ensuremath{\mathcal{L}_{\mathsf{P}}}\xspace}
\newcommand\LJ{\ensuremath{\mathcal{L_{\mathsf{J}}}}\xspace}
\newcommand\axP{\ensuremath{(\mathsf{P}})\xspace}
\newcommand\axJ{\ensuremath{(\mathsf{J}})\xspace}
\newcommand\axPlus{\ensuremath{(\mathsf{+}})\xspace}
\newcommand\ANE{(\ensuremath{\mathsf{AN!}})\xspace}
\newcommand\PI{(\ensuremath{\mathsf{PI}})\xspace}
\newcommand\WE{(\ensuremath{\mathsf{WE}})\xspace}
\newcommand\LE{(\ensuremath{\mathsf{LE}})\xspace}
\newcommand\DIS{(\ensuremath{\mathsf{DIS}})\xspace}
\newcommand\UN{(\ensuremath{\mathsf{UN}})\xspace}
\newcommand\MP{(\ensuremath{\mathsf{MP}})\xspace}
\newcommand\CE{\ensuremath{(\mathsf{CE}})\xspace}
\newcommand\ST{\ensuremath{(\mathsf{ST}})\xspace}
\newcommand\logic{\ensuremath{\mathsf{L}}\xspace}
\newcommand\PJCS{\ensuremath{{\PJ}_{\CS}}\xspace}
\newcommand\JCS{\ensuremath{\mathsf{J}_{\mathsf{CS}}}\xspace}
\newcommand\PPJ{\ensuremath{\mathsf{PPJ}}\xspace}
\newcommand\PJ{\ensuremath{\mathsf{PJ}}\xspace}
\newcommand\J{\ensuremath{\mathsf{J}}\xspace}
\newcommand\LPPTwo{\ensuremath{\mathsf{LPP_2}}\xspace}
\newcommand\PJCSMeas{\ensuremath{\mathsf{{\PJ}_{\CS,Meas}}}\xspace}
\newcommand\true{\ensuremath{\mathsf{T}}\xspace}
\newcommand\false{\ensuremath{\mathsf{F}}\xspace}
\newcommand\CS{\ensuremath{\mathsf{CS}}\xspace}
\newcommand\powerset{\ensuremath{\mathcal{P}}\xspace}
\newcommand\subf{\ensuremath{\mathsf{subf}}\xspace}
\newcommand\system{\ensuremath{\mathcal{S}}\xspace}
\newcommand\myvec[1]{\ensuremath{\boldsymbol{#1}}\xspace}
\newcommand\op{\ensuremath{\mathsf{\Diamond}}\xspace}
\title{The Complexity of Non-Iterated
Probabilistic Justification Logic}
\author{Ioannis Kokkinis}
\affil{Institute of Computer Science, University of Bern, Switzerland}
\begin{document}

\maketitle

\begin{abstract}
The logic \PJ is a probabilistic logic
defined by adding (non-iterated) probability
operators to the basic
justification logic \J.
In this paper we establish upper and lower bounds for the
complexity of the derivability problem in the logic
\PJ. The main result of the
paper is that
the complexity of the derivability problem
in \PJ remains the same as the complexity of the 
derivability problem in the underlying
logic \J, which is $\Pi_2^p$-complete.
This implies hat the probability operators do not increase the 
complexity of the
logic, although they arguably enrich the expressiveness
of the language.
\end{abstract}

\textbf{Keywords:}
justification logic, probabilistic logic, complexity,
derivability, satisfiability

\section{Introduction}

Traditional modal epistemic logic uses formulas
of the form $\Box \alpha$ to express that an agent believes 
$\alpha$. The language of justification logic~\cite{ArtFit11SEP} 
`unfolds' the $\Box$-modality into a family of so-called 
\emph{justification terms}, which
are used to represent evidence for the agent's belief.
Hence, instead of $\Box \alpha$,
justification logic includes formulas of the form $t : \alpha$ 
meaning 
\begin{center}
the agent believes $\alpha$ for reason $t$.
\end{center}
Artemov~\cite{Art95TR,Art01BSL} developed the first justification 
logic, the Logic of Proofs, to provide intuitionistic logic with 
a classical provability semantics. There, justification terms 
represent formal proofs in Peano Arithmetic. However, terms may 
also represent
informal justifications. For instance, our belief in
$\alpha$ may be
justified by direct observation of $\alpha$ or by learning that a 
friend heard about $\alpha$.
This general reading of justification led to a big variety of 
epistemic justification
logics for many different applications 
\cite{BucKuzStu11JANCL,BucKuzStu11WoLLIC,KuzStu12AiML}.
In~\cite{komaogst,koogst} we extended justification logic with probability operators in order to accommodate the idea that 
\begin{center}
different kinds of evidence for $\alpha$
lead to different degrees of belief in $\alpha$.
\end{center}
For example it could be the case that the agent learns
$\alpha$ from some unreliable source (e.g. from
some friend of his) or that the agent reads about
$\alpha$ in some reliable newspaper. 
In both cases the agent has a justification
for $\alpha$: in the first case he has the statement
of his friend and in the second case the text of
the newspaper. However, it is natural that the agent
does not want to put the same credence in both sources
of information. This differentiation in credulity cannot
be expressed in classical justification logic. So,
the main contribution of justification logics with
probability operators (probabilistic justification
logics~\cite{komaogst,koogst}) is the ability to compare different sources of
information.
Uncertain reasoning in justification logic has also
been studied in 
\cite{Mil14APALnonote,2014arXiv1407.4647G,FanL15}.
See \cite{komaogst,koogst} for an extended comparison
between our approach and the ones from
\cite{Mil14APALnonote,2014arXiv1407.4647G,FanL15}.

Probabilistic logics are logics than can be
used to model uncertain reasoning. Although the
idea of probabilistic logic was first proposed by Leibnitz,
the modern development of this topic started only in the 1970s 
and 1980s in the papers of
H. Jerome Keisler~\cite{kei77} and Nils Nilsson~\cite{nil86}. 
Following Nilsson's research, Fagin, Halpern and
Meggido~\cite{fahame90} introduced a logic with arithmetical 
operations built into the syntax
so that Boolean combinations of linear inequalities of 
probabilities of formulas
can be expressed. The probabilistic logic of 
\cite{fahame90} can be considered as a probabilistic
logic with classical base. The derivability problem in
this logic is proved to be $coNP$-complete, the
same as that of classical propositional logic.
Following the lines of~\cite{fahame90},
Ognjanovi\'{c}, Ra\v{s}kovi\'{c} and Markovi\'{c}~\cite{ograma09}
defined the logic \LPPTwo, which is also a probabilistic
logic with classical base. The logic \LPPTwo makes
use of an infinitary rule which makes the proof of strong
completeness possible (as opposed to the finitary system of 
\cite{fahame90} which is only simply complete).
The \LPPTwo-derivability problem is again $coNP$-complete. 

Following the lines of 
\cite{ograma09} the logic \PJ was defined
in \cite{komaogst}.
\PJ is a probabilistic logic
defined over the basic justification logic
\J.\footnote{\J stands for justification, whereas \PJ
stands for probabilistic justification.}
The language of \PJ contains formulas of the form
$P_{\geq s} \alpha$ meaning
\begin{center}
the probability of truthfulness
of the justification formula $\alpha$
is at least $s$.
\end{center}
So, in the logic \PJ, statements like ``evidence
$t$ serves as a justification for $\alpha$ with
probability at least 30\%'' can be expressed.
\PJ
does not allow iterations of the probability operator.
In~\cite{koogst} we study an extension of \PJ, the logic
\PPJ,\footnote{the two P's stand for iterations of the
probability operator.}
where iterations of the probability operator
as well as justification operators over
probability operators are allowed.

The results of
\cite{Kuz00CSLnonote,Mil07APAL,BusKuz12APALnonote,ach15} showed
that, under some reasonable assumptions, the derivability problem for the justification
logic \J is $\Pi^p_2$-complete, i.e.~it is complete
in the second level of the polynomial hierarchy. In this
paper we show that under the same assumptions
the derivability problem for the
probabilistic
justification logic \PJ remains in
the class $\Pi^p_2$-complete. 
We achieve this, by showing that the satisfiability
problem for the logic \PJ, which is dual
to the derivability problem, belongs
to the class $\Sigma^p_2$-complete.
The methods we use are adaptations from
\cite{fahame90} and \cite{Kuz00CSLnonote}.
As it is the case
in \cite{ograma09} and \cite{fahame90} we also make use
of some well known results from the theory of linear
programming.
The main result of the paper is that the probability
operators do not increase the complexity of the
logic, although they arguably
enrich the expressiveness of the logical framework.

The rest of the paper is organized as follows.
In section \ref{sec:JandPJ} we briefly recall the justification
logic \J and the probabilistic justification logic \PJ.
In section \ref{sec:smp} we establish a small model
theorem for \PJ. In section \ref{sec:compl} we present
an algorithm that decides the satisfiability problem
for the logic \PJ and evaluate its complexity.
We close the paper in section \ref{sec:concl} with
some final observations.

An earlier version of the present paper is available in
arXiv~\cite{kok15}.

\section{The logics \J and \PJ}

\label{sec:JandPJ}

In this section we briefly recall the basic
justification logic \J\cite{ArtFit11SEP} and the probabilistic 
justification logic \PJ~\cite{komaogst}. 

\emph{Justification terms} are built according to the 
following grammar:
\[
t :: = c ~|~ x ~|~ (t \cdot t) ~|~ (t+t) ~|~!t
\]
where $c$ is a constant and $x$ is a variable. 
$\Tm$ denotes the set of all terms.
For any term $t$ 
and any non-negative integer~$n$ we define:
\[
!^0 t := t
\qquad\text{and}\qquad
!^{n+1}t := {!} ~ ({!^n} t)
\]
Terms are used to provide justifications
for formulas. Constants are used as justifications for
axioms, whereas variables are used as
justifications for arbitrary formulas.
The operator $\cdot$ can be used by the agents
to apply modus ponens (see axiom \axJ in 
Figure~\ref{fig:Jaxioms}), the operator $+$ is used
for concatenation of proofs (see axiom \axPlus in 
Figure~\ref{fig:Jaxioms}) and the operator $!$ is
used for stating positive introspection (see rule \ANE in 
Figure~\ref{fig:JCSFig}). That is, if the agent
has a justification $c$ for $\alpha$ then he has a
justification $!c$ for the fact that $c$ 
is a justification for $\alpha$ and so on.
 
Let \Prop denote a countable set of atomic propositions.
Formulas of the language \LJ (justification
formulas) are 
built according to the following grammar:
\[
\alpha :: = p ~|~ \lnot \alpha ~|~ \alpha \land \alpha 
~|~ t: \alpha
\]
where $t \in \Tm$ and $p \in \Prop$.
Any formula of the form
$t: \alpha$ for $t \in \Tm$ and $\alpha \in \LJ$
will be called a \emph{justification assertion}.
We will use the letter $p$ possibly primed or with subscripts
to represent an element of \Prop  and lower-case Greek letters 
like $\alpha, \beta, \gamma, \ldots$ for \LJ-formulas.
In Figure~\ref{fig:Jaxioms} we present the axioms 
schemes of the logic \J. 

\begin{figure}[ht]
\centering
\renewcommand*{\arraystretch}{1.25}
\begin{tabular}{|c l|}
\hline
\axP & finitely many axiom schemes for classical\\
& propositional
logic in the language of \LJ \\
\axJ & $\vdash u : (\alpha \to \beta) \to
( v :\alpha \to u \cdot v : \beta ) $\\
\axPlus & $\vdash \big ( u: \alpha \lor v : \alpha \big ) \to  u+v: \alpha$\\
\hline
\end{tabular}
\caption{\label{fig:Jaxioms} Axioms Schemes of \J}
\end{figure}

In order to build justifications for 
arbitrary formulas in the logic
\J we need to start by some justifications for the
axioms. That is why we need the notion of
a constant specification.
A \emph{constant specification} is any set
\CS that satisfies the following condition:
\begin{align*}
\CS \subseteq \{(c,\alpha)~|~ & c \text{ is a constant and
$\alpha$ is an instance }\\
& \text{of some axiom scheme of the logic \J } \}
\end{align*}

A constant specification \CS will be called:

\begin{description}[topsep=0em]
\item[axiomatically appropriate:]
if for every instance of a \J-axiom, $\alpha$, there exists some
constant $c$ such that $(c, \alpha) \in \CS$, i.e.
every instance of a \J-axiom scheme is justified by at least one
constant.
\item[schematic:]
if for every constant $c$ the set
\[
\big \{ \alpha ~ \big |~ (c,\alpha) \in \CS \big \}
\]
consists of all instances of several (possibly zero)
axiom schemes, i.e. if every constant specifies
certain axiom schemes and only them.
\item[decidable:] if the set \CS is decidable. In this
paper when we refer to a decidable \CS, we will always
imply that \CS is decidable in \emph{polynomial time}.
\item[finite:] if \CS is a finite set.
\item[almost schematic:]
if $\CS = \CS_1 \cup \CS_2$ where $\CS_1 \cap \CS_2 = \emptyset$,
$\CS_1$ is a schematic constant specification
and $\CS_2$ is a finite constant specification.
\item[total:] if for every constant $c$ and
every instance $\alpha$
of a \J-axiom scheme, $(c,\alpha) \in \CS$.
\end{description}

Let \CS be any constant specification.
The deductive system \JCS is presented in
Figure~\ref{fig:JCSFig}. 
\begin{figure}[h]
\centering
\renewcommand*{\arraystretch}{1.25}
\begin{tabular}{|c l|}
\hline 
& axioms schemes of \J \\
& ~~~~~+ \\
\ANE & $\vdash {!^{n+1}} c : {!^n} c : \cdots : {!c} : c : \alpha$,
where $(c, \alpha) \in \CS$ and $n \in \Nat$\\
\MP & \text{if }$T \vdash \alpha \text{ and } T \vdash \alpha \to \beta \text{ then } T \vdash \beta$\\
\hline
\end{tabular}
\caption{\label{fig:JCSFig} System \JCS}
\end{figure}

As usual $T \vdash_{\logic} \alpha$ means that the formula
$\alpha$ is provable from the set of formulas $T$ using the
rules and axioms of the logic \logic. When \logic is
clear from the context it will be omitted.

Now we present the semantics for the logic \J.
The models for a \JCS are the so called
\JCS-evaluations (see Definition~\ref{def:BasicEval}).
We use \true to represent the truth value ``true'' and
\false to represent the truth value ``false''.
Let $\powerset(W)$ denote the powerset of
the set $W$.

\begin{definition}[\JCS-Evaluation]
\label{def:BasicEval}
Let \CS be any constant specification.
A \JCS-evaluation is a function $*$ 
such that $* : \Prop \to \{\true, \false\}$
and $*: \Tm \to \powerset(\LJ)$ and for
$u, v \in \Tm$, for a constant $c$ and $\alpha,\beta
\in \LJ$ we
have:
\begin{enumerate}[topsep = 0.25em, label = (\arabic*)]
\item
$ \big ( \alpha \to \beta \in u^*  \text{ and }
\alpha  \in v^* \big ) \Longrightarrow
\beta \in (u \cdot v)^*$
\item
$u^* \cup v^* \subseteq (u + v)^*$
\item
if $(c, \alpha) \in \CS$ then for all 
$n \in \Nat$ we have\footnote{We agree to the convention that the formula
${!^{n-1}} c : {!^{n-2}} c : \cdots : {!c} : c : \alpha$
represents the formula $\alpha$ for $n=0$.}: 
\[
{!^{n-1} c} : {!^{n-2}} c: \cdots 
: !c : c : \alpha \in 
({!^n} c)^*
\]
\end{enumerate}
We will usually write $t^*$ and $p^*$ instead of
$*(t)$ and $*(p)$ respectively.
\end{definition}

Now we will define the binary relation $\Vdash$.

\begin{definition}[Truth under a \JCS-Evaluation]
We define what it means for an \LJ-formula
to hold under a
\JCS-evaluation $*$ inductively as follows:
\begin{align*}
* \Vdash p & \Longleftrightarrow p^* = \true\\
* \Vdash \lnot \alpha & \Longleftrightarrow * \not\Vdash \alpha\\
* \Vdash \alpha \land \beta & \Longleftrightarrow
\big (* \Vdash \alpha \text{ and } * \Vdash \beta \big)\\
* \Vdash t:\alpha & \Longleftrightarrow \alpha \in t^*
\end{align*}
\end{definition}

We have the following theorem.

\begin{theorem}[Completeness of \J\cite{Art12SL,KuzStu12AiML}]
Let \CS be any constant specification.
Let $\alpha \in \LJ$. Then we have:
\[
\vdash_{\JCS} \alpha \quad \Longleftrightarrow \quad \Vdash_{\CS} \alpha .
\]
where $\Vdash_{\CS} \alpha$ means that $\alpha$ holds
under any \JCS-evaluation.
\end{theorem}

Let \RatS be the set of all rational numbers from the
interval $[0,1]$.
The formulas of the language \LP (the so called
probabilistic formulas) are built according to 
the following grammar:
\[
A :: = P_{\geq s} \alpha ~|~ \lnot A ~|~ A \land A
\]
where $s \in \RatS$, and $\alpha \in \LJ$. We use capital Latin 
letters like $A, B, C, \ldots$
for \LP-formulas.
We employ the standard abbreviations for classical connectives. 
Additionally, we set:
\begin{align*}
P_{< s} \alpha & \equiv \lnot P_{\geq s} \alpha \qquad
&P_{\leq s} \alpha & \equiv  P_{\geq 1 - s} \lnot \alpha \\
P_{> s} \alpha  & \equiv \lnot P_{\leq s} \alpha 
&P_{= s} \alpha & \equiv P_{\geq s} \alpha \land P_{\leq s}\alpha
\end{align*}

The axioms schemes of \PJ are presented in
Figure~\ref{fig:PJAxioms}.
\begin{figure}[ht]
\centering
\renewcommand*{\arraystretch}{1.25}
\begin{tabular}{|c l|}
\hline 
\axP & finitely many axiom schemes for
classical\\
& propositional logic in the language of \LP \\
\PI & $\vdash P_{\geq 0} \alpha$\\
\WE & $\vdash P_{\leq r} \alpha \to P_{< s} \alpha$, where $s > r$\\
\LE & $\vdash P_{< s} \alpha \to P_{\leq s} \alpha$\\
\DIS & $\vdash  P_{\geq r} \alpha \land P_{\geq s} \beta \land P_{\geq 1} \lnot (\alpha \land \beta) \to P_{\geq \min(1, r+s)} (\alpha \lor \beta)$\\
\UN & $\vdash P_{\leq r} \alpha \land P_{< s} \beta \to P_{<r+s} (\alpha \lor \beta)$, where $r+s \leq 1$\\
\hline
\end{tabular}
\caption{\label{fig:PJAxioms} Axioms Schemes of \PJ}
\end{figure}
For any constant specification \CS the deductive
system \PJCS is presented in
Figure~\ref{fig:PJCSFig}.
Definitions~\ref{def:algebra}--\ref{def:PJmodel} describe
the semantics for the logic \PJ.

\begin{figure}[ht]
\centering
\renewcommand*{\arraystretch}{1.25}
\begin{tabular}{|c l|}
\hline 
& axiom schemes of \PJ\\
& ~~~~+\\
\MP & if $T \vdash A$ and  $ T \vdash A \to B$ 
then $T \vdash B$\\
\CE & if $\vdash_{\JCS} \alpha$ 
then $\vdash_{\PJCS} P_{\geq 1 } \alpha$\\
\ST & if $T \vdash A \to P_{\geq s - \frac{1}{k} } \alpha$
for every integer $k \geq \frac{1}{s}$ and $s > 0$\\
& then $T \vdash A \to P_{\geq s } \alpha$\\
\hline
\end{tabular}
\caption{\label{fig:PJCSFig} System \PJCS}
\end{figure}

\begin{definition}[Algebra over a set]
\label{def:algebra}
Let $W$ be a non-empty set and let $H$ be a non-empty subset
of $\powerset(W)$. $H$ will be called an
\emph{algebra over $W$} iff the following hold:
\begin{itemize}[topsep = 0em]
\item
$W \in H$
\item
$U,V \in H \Longrightarrow U \cup V \in H$
\item
$U \in H \Longrightarrow W \setminus U \in H$
\end{itemize}
\end{definition}

\begin{definition}[Finitely Additive Measure]
\label{def:FinAdd}
Let $H$ be an algebra over $W$ and $\mu : H \to [0,1]$.
We call $\mu$  a
\emph{finitely additive measure} iff the following hold:
\begin{enumerate}[topsep = 0em, label = (\arabic*)]
\item
$\mu(W) = 1$
\item
for all $U, V \in H$:
\[
U \cap V = \emptyset \Longrightarrow
\mu(U \cup V) = \mu(U) + \mu(V)
\]
\end{enumerate}
\end{definition}

\begin{definition}[$\PJCS$-Model]
\label{def:PJmodel}
Let \CS be any constant specification.
A \emph{\PJCS-model}, or simply a model, is a structure 
$M = \langle W, H, \allowbreak \mu, * \rangle$ where:
\begin{itemize}[topsep = 0em]
\item
$W$ is a non-empty set of objects called worlds.
\item
$H$ is an algebra over $W$.
\item
$\mu : H \to [0,1]$ is a finitely additive measure.
\item
$*$ is a function from $W$ to the set of all
\JCS-evaluations, i.e.~$*(w)$ is a \JCS-evaluation
for each world $w \in W$. We will usually write $*_w$
instead of $*(w)$.
\end{itemize}
\end{definition}

\begin{definition}[Measurable model]
\label{def:measModel}
Let $M = \langle W, H, \allowbreak \mu, * \rangle$ be a model and $\alpha \in \LJ$. We define the 
following set:
\[
[\alpha]_M =
\{ w \in W ~|~ *_w \Vdash \alpha\}
\]
We will omit the subscript $M$, i.e.~we will simply write
$[\alpha]$, if $M$ is clear from the context.
A \PJCS-model $M = \langle W, H, \mu, *\rangle$ is 
\emph{measurable} iff $[\alpha]_M \in H$ for every
$\alpha \in \LJ$. The class of measurable \PJCS-models
will be denoted by \PJCSMeas.
\end{definition}

\begin{definition}[Truth in a $\PJCSMeas$-model]
\label{PJCSTruth}
Let \CS be any constant specification.
Let $M = \langle W, H, \mu, * \rangle$ be a \PJCSMeas-model.
We define what it means for an \LP-formula to hold
in $M$ inductively as follows\footnote{Observe that the
satisfiability relation of a \JCS-evaluation is
represented with $\Vdash$ whereas the satisfiability
relation of a model is represented with $\models$.}:
\begin{align*}
M \models P_{\geq s} \alpha & \Longleftrightarrow
\mu([\alpha]_M) \geq s\\
M \models \lnot A & \Longleftrightarrow M \not\models A\\
M \models A \land B & \Longleftrightarrow \big ( M \models A 
\text{ and } M \models B \big )
\end{align*}

\end{definition}

In the sequel we may refer to \PJCSMeas-models simply
as models if there is no danger for confusion.
We have the following theorem.

\begin{theorem}[Strong Completeness for
\PJ \cite{komaogst}]
\label{thm:PJsoundCompl}
Any \PJCS is sound and 
strongly complete with respect to \PJCSMeas-models, i.e.
for any $T \subseteq \LP$ and any $A \in \LP$:
\[
T \vdash_{\PJCS} A \Longleftrightarrow T \models_{\PJCS} A 
\]
\end{theorem}

Let \CS be any constant specification. 
A formula $A \in \LP$ is satisfied in
$M \in \PJCSMeas$ iff $M \models A$. $A$ will be called
\PJCSMeas-satisfiable or simply satisfiable 
if there is a 
\PJCSMeas-model that satisfies $A$.
We define the \PJCSMeas-satisfiability problem to be
the decision problem defined as follows:

\begin{center}
``For a given $A \in \LP$ and a given \CS is
$A$ \PJCSMeas-satisfiable?''
\end{center}

A formula $\alpha \in \LJ$ is satisfied in
a \JCS-evaluation $*$ iff $* \Vdash \alpha$. $\alpha$ 
will be called
\JCS-satisfiable or simply satisfiable 
if there is some \JCS-evaluation $*$ that satisfies $\alpha$.
We define the \JCS-satisfiability problem to be
the decision problem defined as follows:

\begin{center}
``For a given $\alpha \in \LJ$ and a given \CS is
$\alpha$ \JCS-satisfiable?''
\end{center}

\section{Small Model Property}

\label{sec:smp}

The goal of this section is to prove a small model
property for the logic \PJ.
The small model
property will be the most
important tool for establishing the upper bound for 
the complexity of \PJ.

\begin{definition}[Subformulas]
The set $\subf(\cdot)$ is defined recursively as follows:\\
\textbf{For \LJ-formulas:}
\begin{itemize}[topsep=0em]
\item
$\subf(p) := \{ p \}$
\item
$\subf(t : \alpha) := \{ t : \alpha \} \cup \subf (\alpha)$
\item
$\subf(\lnot \alpha) := \{ \lnot \alpha \} \cup \subf (\alpha)$
\item
$\subf(\alpha \land \beta) := \{ \alpha \land \beta\}\cup
\subf (\alpha) \cup \subf (\beta)$
\end{itemize}
\textbf{For \LP-formulas:}
\begin{itemize}[topsep=0em]
\item
$\subf(P_{\geq s} \alpha) := \{ P_{\geq s} \alpha \} \cup \subf 
(\alpha)$
\item
$\subf(\lnot A) := \{ \lnot A \} \cup \subf (A)$
\item
$\subf(A \land B) := \{ A \land B\} \cup  \subf (A) \cup \subf
(B)$
\end{itemize}
Observe that for $A \in \LP$ we have that
$\subf(A) \subseteq \LP \cup \LJ$.

\end{definition}

\begin{definition}[Atoms]
Let $A$ be an $\LP$- or an $\LJ$-formula.
Let $X$ be the set that
contains all the atomic propositions and the
justification assertions from the set
$\subf(A)$. An atom of $A$ is any  formula of the following
form:
\begin{equation}
\label{decid2}
\bigwedge_{B \in X} \pm B
\end{equation}
where $\pm B$ denotes either $B$ or $\lnot B$.
We will use the lowercase
Latin letter $a$ for atoms, possibly with subscripts.
\end{definition}

Let $A$ be an \LP- or an  \LJ-formula.
Assume that $A$ is either of the form
$\bigwedge_{i} B_i$ or of the form
$\bigvee_{i} B_i$. Then
$C \in A$ means that for some $i$, $B_i = C$.

\begin{definition}[Sizes]
The size function $|\cdot|$ is defined as follows:\\
\textbf{For \LP-formulas:} (recursively)
\begin{itemize}[topsep=0em]
\item
${\mid} P_{\geq s} \alpha {\mid} := 2$
\item
${\mid}\lnot A{\mid} := 1+{\mid}A{\mid}$
\item
${\mid}A \land B {\mid} := {\mid}A{\mid} + 1 + {\mid}B{\mid}$
\end{itemize}
\textbf{For sets:}\\
Let $W$ be a set. 
$|W|$ is the cardinal number of $W$.\\
\textbf{For non-negative integers:}\\
Let $r$ be an non-negative integer. We define the size
of $r$ to be equal to
the length of $r$ written in binary, i.e.:
\[
|r| := 
\begin{cases}
1 & , r = 0\\
\lfloor \log_2 (r) + 1 \rfloor & , r \geq 1
\end{cases}
\]
where $\lfloor \cdot \rfloor$ is the function that
returns the greatest integer that is less than or
equal to its argument.\\
\textbf{For non-negative rational numbers:}\\
Let $r = \frac{s_1}{s_2}$, where $s_1$ and
$s_2$ are relatively prime non-negative
integers with $s_2 \neq 0$, 
be a non-negative rational number. We define:
\[
|r| :=  |s_1| + |s_2|
\]
\end{definition}

Let $A \in \LP$ we define:
\[
||A|| := \max \big  \{ |s| ~ \big |~ P_{\geq s } \alpha \in \subf(A) 
\big \}
\]

Lemma~\ref{lem:prop} was originally proved in
\cite{ograma09} for the logic \LPPTwo.
The proof for the logic \PJ is given in \cite{komaogst}.

\begin{lemma} 
\label{lem:prop}
For any constant specification \CS, we have:
\[
\vdash_{\JCS} \alpha \leftrightarrow \beta
\Longleftrightarrow
~\vdash_{\PJCS} P_{\geq s} \alpha \leftrightarrow
P_{\geq s} \beta
\]
\end{lemma}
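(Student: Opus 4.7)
The plan is to prove both directions semantically, leveraging the soundness and completeness theorems for \JCS and \PJCS established in \cite{komaogst}. This avoids reasoning directly with the axiomatic apparatus (the \CE rule, probabilistic additivity axioms, the infinitary rule \ST) and gives a much shorter argument than a syntactic derivation. Throughout, the idea is that two justification formulas are interderivable in \JCS precisely when they have the same truth value in every \JCS-world, and probabilistic formulas only ``see'' the measure of the set of worlds where a formula holds.

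For the forward direction, assume $\vdash_{\JCS} \alpha \leftrightarrow \beta$ and let $M$ be an arbitrary probabilistic \PJCS-model, consisting (per \cite{komaogst}) of a measurable family of \J-worlds equipped with a probability measure $\mu$. Soundness of \JCS implies that in every world of $M$ the formulas $\alpha$ and $\beta$ have the same truth value, so the measurable sets $[\alpha]$ and $[\beta]$ coincide and $\mu([\alpha]) = \mu([\beta])$. Hence $M \models P_{\geq s}\alpha$ iff $M \models P_{\geq s}\beta$ for every $s \in \RatS$. Since $M$ was arbitrary, completeness of \PJCS yields $\vdash_{\PJCS} P_{\geq s}\alpha \leftrightarrow P_{\geq s}\beta$.

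For the reverse direction, I would argue the contrapositive. Suppose $\not\vdash_{\JCS} \alpha \leftrightarrow \beta$. Completeness of \JCS produces a \JCS-model $M_0$ in which one of the two formulas holds and the other fails; without loss of generality $M_0 \models \alpha$ and $M_0 \not\models \beta$. I then build a probabilistic \PJCS-model whose sample space contains $M_0$ as a measurable atom of measure $1$ (in the simplest setup, the one-world model $\{M_0\}$ with the Dirac measure, provided this satisfies the semantic requirements of \PJCS; otherwise, embed $M_0$ as an atom of measure $1$ in a richer space of \J-worlds, which is always possible since the measurability conditions of \cite{komaogst} are preserved). In this model $\mu([\alpha]) = 1$ and $\mu([\beta]) = 0$, so for any $s$ with $0 < s \leq 1$ we have $P_{\geq s}\alpha$ true and $P_{\geq s}\beta$ false, whence the biconditional fails. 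Soundness of \PJCS then gives $\not\vdash_{\PJCS} P_{\geq s}\alpha \leftrightarrow P_{\geq s}\beta$.

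The main subtlety is the boundary case $s = 0$: since $P_{\geq 0}\gamma$ is valid for every $\gamma$, the right-hand side is trivially derivable independent of whether $\alpha$ and $\beta$ are interderivable, so the statement must be read with $s > 0$ (or equivalently the iff is understood uniformly over those $s$ that can distinguish measures). The only other point requiring care is ensuring that the single-atom probabilistic structure used in the reverse direction qualifies as a bona fide \PJCS-model under the precise semantic definition of \cite{komaogst}; if not, a routine embedding of $M_0$ into a larger measurable family (and redistributing measure) resolves it without affecting the argument.
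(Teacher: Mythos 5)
Your argument is correct in substance but takes a genuinely different route from the one the paper relies on. The paper does not prove \Fref{lem:prop} itself: it defers to \cite{komaogst} (following \cite{ograma09}), where the forward direction is obtained \emph{syntactically} --- from $\vdash_{\JCS}\alpha\to\beta$ one derives $\vdash_{\PJCS}P_{\geq s}\alpha\to P_{\geq s}\beta$ inside the probabilistic calculus, using the rule that infers $P_{\geq 1}\gamma$ from a theorem $\gamma$ together with the finite-additivity axioms. You instead argue semantically through soundness and completeness of \JCS and of \PJ with respect to measurable models; this is legitimate given that \cite{komaogst} supplies those theorems, and your one-world Dirac model for the converse is exactly the construction this paper itself uses in the proof of \Fref{thm:PJCSMeaasSatS2PHard}, so that step is unproblematic. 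What the syntactic route buys is independence from the completeness theorem for \PJ, which is a heavy result (its proof requires the infinitary rule and a Lindenbaum-type construction) and in \cite{komaogst} sits downstream of elementary derivability facts of this kind; if you go the semantic way you should confirm that no circularity arises and that the completeness theorems you invoke hold for an arbitrary \CS, since the lemma is stated for any constant specification. Your remark about $s=0$ is a genuine and correct observation: because $P_{\geq 0}\gamma$ is an axiom, the right-hand side is derivable for $s=0$ whatever $\alpha$ and $\beta$ are, so the literal biconditional must be read with $s>0$ or with the right-hand side quantified over all $s$; the paper only ever uses the $\Longrightarrow$ direction (in the proof of \Fref{thm:smp}), for which $s=0$ is harmless.
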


A proof for Theorem~\ref{thm:eqThm} can be found in \cite[p. 145]{chvatal83}.

\begin{theorem}
\label{thm:eqThm}
Let \system be a system of $r$ linear equalities. Assume that the 
vector\footnote{We will always
use bold font for vectors.} $\myvec{x}$
is a solution of \system such that all of
$\myvec{x}$'s entries are non-negative.
Then there is a vector $\myvec{x^*}$ such that:
\begin{enumerate}[label=(\arabic*), topsep = 0em]
\item
\label{enum:eqThmSol}
$\myvec{x^*}$ is a solution of \system.
\item
all the entries of $\myvec{x^*}$ are non-negative.
\item
\label{enum:eqThmRpos}
at most $r$ entries of $\myvec{x^*}$ are positive.
\end{enumerate}
\end{theorem}

Theorem~\ref{thm:linearThm} establishes some
properties for the solutions of a linear system.

\begin{theorem}
\label{thm:linearThm}
Let \system be a linear system of
$n$ variables and of
$r$ linear equalities and/or 
inequalities with integer coefficients each of size at most 
$l$.
Assume that the vector \mbox{$\myvec{x} = x_1, \ldots, x_n$}
is a solution of \system such that for all $i \in 
\{ 1, \ldots, n\}$, $x_i \geq 0$.
Then there is a vector $\myvec{x^*} = x^*_1, \ldots, x^*_n$
with the following properties:
\begin{enumerate}[label=(\arabic*), topsep = 0.3em]
\item
\label{enum:linearThmSol}
$\myvec{x^*}$ is a solution of \system.
\item
\label{enum:linearThmNonNeg}
for all $i \in \{ 1, \ldots, n\}$, $x^*_i \geq 0$.
\item
\label{enum:linearThmRpos}
at most $r$ entries of $\myvec{x^*}$ are positive.
\item
\label{enum:linearThmPres}
for all $i \in \{ 1, \ldots, n\}$, if $x^*_i > 0$ then $x_i > 0$.
\item
\label{enum:linearThmSize}
for all $i$, $x^*_i$ is a non-negative
rational number with size bounded by
\[
2 \cdot \big ( r \cdot l+ r \cdot \log_2 (r) + 1 \big )~.
\]
\end{enumerate}
\end{theorem}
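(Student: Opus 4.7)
The plan is to reduce the general case to \Fref{thm:eqThm} by handling three obstacles in turn: (a) \system contains inequalities, not only equalities; (b) we need the support-preservation property~\ref{enum:linearThmPres}; and (c) \Fref{thm:eqThm} as stated gives no size bound.

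For (a), I would introduce one non-negative slack variable per inequality in the standard way, converting each $\sum_j a_j x_j \le b$ into $\sum_j a_j x_j + s = b$ and each $\sum_j a_j x_j \ge b$ into $\sum_j a_j x_j - s = b$. This produces a system $\system'$ of $r$ linear equalities in at most $n + r$ variables whose coefficients remain integers of size at most $l$, since the new slack coefficients are $\pm 1$. The given vector $\myvec{x}$ extends uniquely to a non-negative solution $\myvec{y}$ of $\system'$ by setting each slack equal to the value required by its inequality.

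For (b), let $I$ be the set of indices where $\myvec{y}$ is strictly positive and let $\system''$ be the system obtained from $\system'$ by deleting every column whose index lies outside $I$. Then $\myvec{y}$ restricted to $I$ is a strictly positive, hence non-negative, solution of $\system''$, which is still a system of $r$ linear equalities with integer coefficients of size at most $l$. Applying \Fref{thm:eqThm} to $\system''$ yields a non-negative solution $\myvec{z^*}$ with at most $r$ positive entries; padding with zeros in the coordinates outside $I$ produces a non-negative solution $\myvec{y^*}$ of $\system'$ with at most $r$ positive entries, and taking the first $n$ coordinates of $\myvec{y^*}$ gives the desired $\myvec{x^*}$. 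Properties \ref{enum:linearThmSol}--\ref{enum:linearThmRpos} are then immediate, while \ref{enum:linearThmPres} follows because $x^*_i > 0$ with $i \le n$ forces $i \in I$, hence $y_i > 0$ and thus $x_i = y_i > 0$.

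The real work is (c), and it is where I expect the bookkeeping to be most delicate. Unpacking the standard proof of \Fref{thm:eqThm}, the solution $\myvec{z^*}$ can be chosen as a basic feasible solution of $\system''$, so its positive entries are given by $A_B^{-1} b$ for a non-singular square integer submatrix $A_B$ of order at most $r$ with entries of absolute value at most $2^l$. Cramer's rule then expresses each positive entry of $\myvec{z^*}$ as a ratio of two such determinants. The elementary bound $|\det(M)| \le r! \cdot 2^{lr} \le r^r \cdot 2^{lr}$ shows that each such determinant has binary size at most $r \cdot l + r \cdot \log_2(r) + 1$, so each positive entry of $\myvec{x^*}$ has rational size at most $2 \cdot (r \cdot l + r \cdot \log_2(r) + 1)$; zero entries have size $1$, which is much smaller. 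This establishes \ref{enum:linearThmSize} and completes the argument.
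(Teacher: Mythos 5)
Your route is genuinely different from the paper's: you eliminate the inequalities by introducing slack variables (the Fagin--Halpern--Megiddo style reduction), restrict to the support of the extended solution to secure property~\ref{enum:linearThmPres}, and then invoke \Fref{thm:eqThm} once, reading the size bound off a basic feasible solution via Cramer's rule. The paper instead replaces each inequality by the equality obtained by pinning its left-hand side to the value it takes at the given solution $\myvec{x}$, and then runs an iterative loop alternating applications of \Fref{thm:eqThm} with deletion of linearly dependent equations until a square non-singular system remains. Your version has the advantage that the right-hand sides of the resulting equality system stay equal to the original integer constants, which makes the determinant bound cleaner than in the paper. Properties \ref{enum:linearThmNonNeg}--\ref{enum:linearThmSize} come out correctly from your construction.

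There is, however, a genuine gap in step (a): the theorem has to cover \emph{strict} inequalities (the paper's proof explicitly treats $\op\in\{<,\leq,\geq,>\}$, and the application in the small model theorem uses $\op_{ij}\in\{\geq,<\}$), and the slack-variable reduction with the constraint $s\geq 0$ does not preserve them. Concretely, let \system be the single inequality $x_1<1$ with $r=1$ and the solution $x_1=\tfrac{1}{2}$; your $\system'$ is $x_1+s=1$ with $x_1,s\geq 0$, both variables lie in the support $I$, and \Fref{thm:eqThm} is free to return the basic solution $x_1=1$, $s=0$, whose first coordinate violates $x_1<1$. So property~\ref{enum:linearThmSol} can fail for the $\myvec{x^*}$ you construct. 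The repair is essentially the paper's move: replace each (strict or non-strict) inequality $b_1y_{i_1}+\ldots+b_my_{i_m}\ \op\ c$ by the equality $b_1y_{i_1}+\ldots+b_my_{i_m}=b_1x_{i_1}+\ldots+b_mx_{i_m}$, so that every solution of the new system reproduces the left-hand-side value attained by $\myvec{x}$ and hence satisfies the original constraint whether or not it is strict. If you adopt that fix, note that the right-hand sides are then no longer the original integers of size at most $l$, so the Cramer's-rule size bound needs an extra word of justification (e.g., clearing denominators or bounding the new constants separately).
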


\begin{proof}
In \system we replace the variables
that correspond to the entries of $\myvec{x}$ that are equal
to zero (if any) with zeros. This way we obtain a
new linear system $\system_0$, with $r$ linear equalities
and/or inequalities and $m \leq n$ variables.
$\myvec{x}$
is a solution\footnote{In the proof of Theorem~\ref{thm:linearThm} all vectors
have $n$ entries. The entries of the vectors
are assumed to be in one to one
correspondence with the variables that appear in the
original system \system.

Let $\myvec{y}$ be
a solution of a linear system $\mathcal{T}$.
If $\myvec{y}$ has more entries than the 
variables of $\mathcal{T}$
we imply that
entries of $\myvec{y}$ that correspond to
variables that appear in $\mathcal{T}$
compose a solution of $\mathcal{T}$.} 
of $\system_0$. 
It also holds that
any solution of $\system_0$
is a solution\footnote{Assume that system
$\mathcal{T}$ has less variables than system
$\mathcal{T}'$. When we say that any solution
of $\mathcal{T}$ is a solution of $\mathcal{T}'$ we imply
that the missing variables are set to $0$.} of \system.

Assume that the system $\system_0$ contains an inequality of
the form 
\begin{equation}
\label{eq:ineqEx}
b_1 \cdot y_{1} + \ldots + b_m y_{m} ~\op~ c
\end{equation}
for $\op \in \{ < , \leq , \geq, >\}$ where
$y_{1}, \ldots , y_{m}$ are variables of
\system
and $b_1 , \ldots ,$
$b_m, c$ are constants that appear in
\system.
$\myvec{x}$ is a solution of \eqref{eq:ineqEx}.
We replace the inequality
\eqref{eq:ineqEx} in $\system_0$ with the following
equality:
\[
b_1 \cdot y_1 + \ldots + b_m y_{m} = b_1 \cdot 
x_1 + \ldots + b_l \cdot x_m
\]
We repeat this procedure for every inequality of $\system_0$.
This way we obtain a system of linear equalities which
we call $\system_1$. It is easy to see that $\myvec{x}$ is
a solution of $\system_1$ and
that any solution of $\system_1$ is also a
solution of $\system_0$ and thus of \system.

Now we will transform $\system_1$ to another linear
system by applying the following algorithm.\\
\textbf{Algorithm:}\\
We set $i = 1$, $e_i = r$, $v_i = m$, $\myvec{x^i} = \myvec{x}$
and we execute the following steps:
\begin{enumerate}[label = (\roman*)]
\item
\label{enum:linTransEqVarCheck}
If $e_i = v_i$ then go to step
\ref{enum:linTransEqNonZeroDetVarCheck}.
Otherwise go to step \ref{enum:linTransIneqVarCheck}.
\item
\label{enum:linTransEqNonZeroDetVarCheck}
If the determinant of $\system_i$ is non-zero then stop.
Otherwise go to step \ref{enum:linTransDepend}.
\item
\label{enum:linTransIneqVarCheck}
If $e_i < v_i$ then go to step \ref{enum:linTransEqThm},
else go to step \ref{enum:linTransDepend}.
\item
\label{enum:linTransEqThm}
We know that the vector $\myvec{x^i}$ is a non-negative
solution for the system $\system_i$.
From Theorem~\ref{thm:eqThm} we obtain a solution $\myvec{x^{i+1}}$
for the system
$\system_i$ which has at most $e_i$ entries positive.
In $\system_i$ we replace the variables that
correspond to zero entries of the solution $\myvec{x^{i+1}}$
with zeros. We obtain a new system
which we call $\system_{i+1}$ with $e_{i+1}$ equalities
and $v_{i+1}$ variables. 
$\myvec{x^{i+1}}$ is a solution of $\system_{i+1}$
and any solution of $\system_{i+1}$ is a solution
of $\system_i$.
We set $i := i+1$ and
we go to step \ref{enum:linTransEqVarCheck}.
\item
\label{enum:linTransDepend}
From any set of equalities that are linearly dependent we
keep only one equation. We obtain a new system
which we call $\system_{i+1}$ with $e_{i+1}$ equalities
and $v_{i+1} := v_i$ variables. We set $i := i+1$ and
$\myvec{x^{i+1}} := \myvec{x^i}$.
We go to step \ref{enum:linTransEqVarCheck}.
\end{enumerate}
Let $I$ be the final value of $i$ after the execution of
the algorithm. Since the only way for our algorithm to
terminate is through step \ref{enum:linTransEqNonZeroDetVarCheck}
it holds that system $\system_I$
is an $e_I \times e_I$ system of linear equalities with
non-zero determinant (for $e_I \leq r$).
System $\system_I$ is obtained from system
$\system_1$ by replacing some variables that correspond
to zero entries of the solution with zeros. So
any solution of $\system_I$ is also a solution of system
$\system_1$ and thus a solution of
\system. From the algorithm we have that
$\myvec{x^I}$ is a solution of
$\system_I$. Since $\system_I$ has a non-zero
determinant
Cramer's rule can be applied. Hence the vector $\myvec{x^I}$
is the unique solution of system $\system_I$.
Let $x^I_i$ be an entry of
$\myvec{x^I}$. $x^I_i$ will be equal to the following rational
number
\[
\frac{
\begin{vmatrix}
a_{11} & \ldots ~a_{1e_I}\\
& \ddots \\
a_{e_I1} & \ldots~ a_{e_Ie_I}
\end{vmatrix}
}{
\begin{vmatrix}
b_{11} & \ldots ~  b_{1e_I}\\
& \ddots \\
b_{e_I1} & \ldots ~ b_{e_Ie_I}
\end{vmatrix}
}
\]
where all the $a_{ij}$ and $b_{ij}$ are integers that
appear in the original system \system. By properties
of the determinant we know that the numerator and the
denominator of the above rational number
will each be at most equal to
$r! \cdot (2^l - 1) ^ r$. So we have that:
\begin{align*}
|x^I_i| & \leq 2 \cdot \big 
(\log_2 (r! \cdot (2^l - 1) ^ r)  + 1 \big ) 
& \Longrightarrow\\
|x^I_i| & \leq 2 \cdot 
\big ( \log_2 (r^r \cdot 2^{l \cdot r})  + 1 \big ) 
& \Longrightarrow\\
|x^I_i| & \leq 2 \cdot 
\big ( r \cdot \log_2 (r) + l \cdot r  + 1 \big ) 
\end{align*}
As we already mentioned the final
vector $\myvec{x^I}$ is a solution of the original linear
system \system. 
We also have
that all the entries of $\myvec{x^I}$ are non-negative,
at most $r$ of its entries are positive and
the size of each entry of $\myvec{x^I}$ is bounded by
$2 \cdot ( r \cdot \log_2 r + r \cdot l + 1)$. Furthermore,
since the variables that correspond to zero entries of
the original vector $\myvec{x}$ were replaced by zeros, we have
that for every $i$, if the i-th entry of $\myvec{x^I}$
is positive then the i-th entry of $\myvec{x}$ is positive too.
So $\myvec{x^I}$ is the requested vector $\myvec{x^*}$.
\end{proof}

The following theorem is an adaptation of
the small model theorem from \cite{fahame90}. Similar
techniques have also been
used in \cite{ograma09} to obtain decidability for
the logic \LPPTwo.

\begin{theorem}[Small Model Property]
\label{thm:smp}
Let \CS be any constant specification 
and let $A \in \LP$.
If $A$ is \PJCSMeas-satisfiable then it is satisfiable
in a \PJCSMeas-model $M = \langle W, H, \mu, * \rangle$ such 
that:
\begin{enumerate}[label=(\arabic*)]
\item
\label{enum:smpWorlds}
$|W| \leq |A|$
\item
\label{enum:smpAlgebra}
$H = \powerset (W)$
\item
\label{enum:smpFinAddw}
For every $w \in W$,
$\mu (\{ w \})$ is a rational number with size at most
\[
2 \cdot \big ( |A| \cdot || A || + |A| \cdot \log_2 (|A|)
+ 1
\big ) 
\]
\item
\label{enum:smpFinAddV}
For every $V \in H$
\[
\mu (V) = \sum_{w \in V} \mu(\{ w\})
\]
\item
\label{enum:smpAtoms}
For every atom of $A$, $a$, there exists
at most one $w \in W$ such that $*_w \Vdash a$.
\end{enumerate}
\end{theorem}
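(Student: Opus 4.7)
The plan is to follow the Fagin--Halpern--Megiddo small model construction, using \Fref{thm:linearThm} as the key technical tool. Starting from any \PJCSMeas-model $M = \langle W, H, \mu, * \rangle$ that satisfies $A$, I would partition $W$ according to which atom of $A$ its worlds satisfy. Since every atom of $A$ is a maximally consistent conjunction of signed atomic propositions and justification assertions from $\subf(A)$, each world satisfies exactly one atom. Writing the atoms as $a_1, \ldots, a_k$ and letting $W_i = \{ w \in W : w \models a_i \}$, the quantities $x_i := \mu(W_i)$ are well defined --- measurability of $W_i$ follows from the measurability hypotheses built into \PJCSMeas-models --- and they completely determine the truth value in $M$ of every probability subformula of $A$.

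Next I would encode the satisfaction of $A$ in a linear system \system over the variables $x_1, \ldots, x_k$. The constraints are the normalization $\sum_i x_i = 1$ together with, for every probability subformula $P_{\geq s} \alpha \in \subf(A)$, either $\sum_{i \in I_\alpha} x_i \geq s$ or $\sum_{i \in I_\alpha} x_i < s$, depending on whether $M$ satisfies $P_{\geq s}\alpha$; here $I_\alpha$ is the set of indices of atoms that entail $\alpha$. Clearing the rational $s$ yields integer coefficients of size at most $\|A\|$. To fit the format of \Fref{thm:linearThm}, I would convert each strict inequality into an equality using a positive slack variable, obtaining a system with at most $|A|$ equalities/inequalities and coefficient size at most $\|A\|$, of which the vector of original $x_i$'s together with the slacks is a non-negative solution.

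Applying \Fref{thm:linearThm} with $r \leq |A|$ and $l \leq \|A\|$ yields a rational solution $\myvec{x^*}$ with at most $|A|$ positive entries and with each entry of size at most $2 \cdot (|A| \cdot \|A\| + |A| \cdot \log_2(|A|) + 1)$; property \ref{enum:linearThmPres} guarantees that every slack variable that was positive stays positive, so each strict inequality is preserved. For each index $i$ with $x_i^* > 0$ I would pick one representative world $w_i \in W_i$ of the original model, and take $W' = \{ w_i : x_i^* > 0 \}$, $H' = \powerset(W')$, $\mu'(\{w_i\}) := x_i^*$ extended by finite additivity, and $*$ restricted to $W'$.

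The final step is to verify the five listed properties. Items \ref{enum:smpWorlds}, \ref{enum:smpAlgebra}, \ref{enum:smpFinAddw}, \ref{enum:smpFinAddV} and \ref{enum:smpAtoms} are essentially immediate from the construction --- one representative world per satisfied atom, power-set algebra with finitely additive rational measure, and the correct size bound inherited from \Fref{thm:linearThm}. The substantive point is that the new model still satisfies $A$: since each $w_i$ has the same justification-theoretic behaviour as any world in $W_i$, and since the new per-atom probabilities satisfy exactly the same system of (strict and non-strict) inequalities, each probability subformula $P_{\geq s}\alpha$ of $A$ retains its truth value, and $A$ --- being a Boolean combination of such subformulas --- remains satisfied. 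The parts I expect to be most delicate are the clean handling of strict inequalities within the integer-coefficient framework of \Fref{thm:linearThm}, and checking that passing to representative worlds together with the restricted evidence function $*$ still produces a genuine \PJCSMeas-model consistent with \CS.
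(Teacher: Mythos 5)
Your overall strategy is the same as the paper's: partition the worlds of an arbitrary satisfying model according to the atoms of $A$, record the measures $x_i$ of the blocks, observe that these numbers satisfy a linear system consisting of the normalization $\sum_i x_i = 1$ together with one (in)equality per probability subformula, invoke \Fref{thm:linearThm} to replace them by a non-negative rational solution with small support and small entry size, and rebuild a model with one world per surviving atom. (The paper reaches the per-atom inequalities slightly more syntactically, by first rewriting $A$ as a disjunction of conjunctions of literals $P_{\op s}\alpha$ with each $\alpha$ a disjunction of atoms, using \Fref{lem:prop}; your direct semantic encoding of the truth value of every probability subformula is an equivalent route and does not change anything essential.)

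There is, however, one genuine error: your treatment of the strict inequalities. You convert each constraint $\sum_{i\in I_\alpha} x_i < s$ into an equality by adding a positive slack variable and then claim that property~\ref{enum:linearThmPres} of \Fref{thm:linearThm} ``guarantees that every slack variable that was positive stays positive''. That property states the opposite implication: if $x^*_i>0$ then $x_i>0$. Entries that were positive in the original solution are precisely the ones the theorem is allowed to set to zero --- that is how the support shrinks to at most $r$ --- so a positive slack may well become $0$ in $\myvec{x^*}$, turning your intended strict inequality into an equality and flipping the truth value of the corresponding $\lnot P_{\geq s}\alpha$. The fix is to drop the slack variables entirely: \Fref{thm:linearThm} is stated for systems containing inequalities with $\op\in\{<,\leq,\geq,>\}$, and its proof preserves strictness by replacing each inequality with the equality whose right-hand side is the value attained by the original solution. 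This is exactly how the paper feeds its system \system (with $\op_{ij}\in\{\geq,<\}$) to \Fref{thm:linearThm}. With that correction the rest of your argument goes through.
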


\begin{proof}
Let \CS be any constant specification and let
$A \in \LP$.
Let $a_1 , \ldots , a_n$ be all the atoms of $A$.
By propositional reasoning (in the logic \PJCS)
we can prove that:
\[
\PJCS \vdash A \leftrightarrow \bigvee_{i=1}^{K} \bigwedge_{j=1}^{l_i} P_{\op_{ij} s_{ij}} 
(\beta^{ij})
\]
where all the $P_{\op_{ij} s_{ij}} (\beta^{ij})$
appear in $A$ and $\op_{ij} \in \{ \geq, <\}$.

By using propositional reasoning again (but this time in
the logic \JCS) we can prove that each $\beta^{ij}$ is
equivalent to a disjunction of some atoms of $A$.
So, by using Lemma~\ref{lem:prop} we have that:
\[
\PJCS \vdash A \leftrightarrow \bigvee_{i=1}^{K} 
\bigwedge_{j=1}^{l_i} P_{\op_{ij} s_{ij}} (\alpha^{ij})
\]
where each $\alpha^{ij}$ is a disjunction of
some atoms of $A$. By Theorem \ref{thm:PJsoundCompl} we
have that for any $M \in \PJCSMeas$:
\begin{equation}
\label{eq:satEquiv}
M \models A \Longleftrightarrow M \models
\bigvee_{i=1}^{K} 
\bigwedge_{j=1}^{l_i} P_{\op_{ij} s_{ij}} (\alpha^{ij})
\end{equation}
Assume that
$A$ is satisfiable. By \eqref{eq:satEquiv} there must exist some 
$i$ such that
\[
\bigwedge_{j=1}^{l_i} P_{\op_{ij} s_{ij}} ( \alpha^{ij} )
\]
is satisfiable.
Let $M' = \langle W', H', \mu', *' \rangle$ be a model
such that:
\begin{equation}
\label{eq:SMPdisjSat}
M' \models
\bigwedge_{j=1}^{l_i} P_{\op_{ij} s_{ij}} ( \alpha^{ij} )
\end{equation}
For every $k \in \{1, \ldots, n\}$ we define:
\begin{equation}
\label{eq:XeqMu}
x_k = \mu'([a_k]_{M'})
\end{equation}
In every world of $M'$ some atom of $A$ must hold.
Thus, we have:
\[
W' = \bigcup^n_{k=1} [a_k]_{M'}
\]
And since $\mu'(W') = 1$ we get:
\begin{equation}
\label{eq:SMPmuOne}
\mu' \Big( \bigcup^n_{k=1} [a_k]_{M'} \Big) = 1
\end{equation}
The $a_k$'s are atoms of the same formula, so we have:
\begin{equation}
\label{eq:SMPdisj}
k \neq k' \Longrightarrow
[a_k]_{M'} \cap [a_{k'}]_{M'} = \emptyset
\end{equation}
By \eqref{eq:SMPmuOne}, \eqref{eq:SMPdisj} and the fact that
$\mu'$ is a finitely additive measure we get:
\[
\sum^n_{k=1} \mu'([a_k]_{M'}) = 1
\]
and by \eqref{eq:XeqMu}:
\begin{equation}
\label{eq:SMPsumOne}
\sum^n_{k=1} x_k= 1
\end{equation}
Let $j \in \{ 1, \ldots , l_i \}$. From
\eqref{eq:SMPdisjSat} we get:
\begin{equation*}
M' \models P_{\op_{ij} s_{ij} } \big (\alpha^{ij} \big ).
\end{equation*}
This implies that
$\mu' ( [\alpha^{ij}]_{M'})\ \op_{ij} \ s_{ij}$,
i.e.
\begin{equation*}
\mu' \Bigg ( \Bigg [\bigvee_{a_k \in \alpha^{ij}} a_k 
\Bigg ]_{M'} \Bigg )\  
\op_{ij} \ s_{ij}
\end{equation*}
which implies that
\begin{equation*}
\mu' \Bigg ( \bigcup_{ a_k \in \alpha^{ij}}
[ a_k ]_{M'} \Bigg )\ \op_{ij} \ s_{ij}
\end{equation*}
By \eqref{eq:SMPdisj} and the additivity
of $\mu'$ we have that:
\begin{equation*}
\sum_{a_k \in \alpha^{ij}} \mu' ([a_k]_{M'})\  \op_{ij} \ 
s_{ij}
\end{equation*}
and by \eqref{eq:XeqMu}:
\[
\sum_{a_k \in \alpha^{ij}} x_k \  \op_{ij} \ s_{ij}~.
\]
So we have that
\begin{equation}
\label{eq:sumOpS}
\text{for every } j \in \{ 1, \ldots , l_i\},
\sum_{a_k \in \alpha^{ij}} x_k \  \op_{ij} \ s_{ij}
\end{equation}

Let \system be the following linear system:
\begin{align}
\nonumber
\sum^n_{k=1}& z_k = 1\\
\nonumber
\sum_{a_k \in \alpha^{i1}} z_k \  & \op_{i1} \ s_{i1}\\
\nonumber & \vdots\\
\nonumber
\sum_{a_k \in \alpha^{il_i}} z_k \  & \op_{il_i} \ s_{il_i}
\end{align}
where the variables of the system are $z_1, \ldots, z_n$.
We have the following:
\begin{enumerate}[label=(\roman*), topsep = 0em]
\item
\label{enum:condSol}
By \eqref{eq:SMPsumOne} and \eqref{eq:sumOpS}
the vector $\myvec{x} = x_1, \ldots, x_n$ is a solution
of \system.
\item
From \eqref{eq:XeqMu} every $x_k$ is non-negative. 
\item
Every $s_{ij}$ is a rational number with size at most
$||A||$.
\item
\label{enum:condNumEq}
System \system has at most $|A|$ equalities
and inequalities.
\end{enumerate}
From \ref{enum:condSol}-\ref{enum:condNumEq} and 
Theorem~\ref{thm:linearThm} we have that there exists
a vector $\myvec{y} = y_1, \ldots, y_n$ such that:
\begin{enumerate}[label=(\Roman*), topsep = 0em]
\item
\label{enum:ySolS}
$\myvec{y}$ is a solution of \system.
\item
\label{enum:sizeOfy}
every $y_i$ is a non-negative
rational number with size at most
\[
2 \cdot \big( |A| \cdot ||A||+ |A| \cdot \log_2 (|A|) + 1
\big )~.
\]
\item
at most $|A|$ entries of $\myvec{y}$ are positive.
\item
\label{enum:posInvar}
for all $i$, if $y_i > 0$ then $x_i > 0$.
\end{enumerate}
Assume that
$y_1, \ldots, y_N$ are the positive entries of $\myvec{y}$
where 
\begin{equation}
\label{eq:NleqA}
N \leq |A|
\end{equation}
We define the quadruple $M = \langle W, H, \mu, * \rangle$
as follows:
\begin{enumerate}[label=(\alph*), topsep = 0em]
\item
\label{enum:Mworlds}
$W = \{ w_1 , \ldots, w_N\}$, for some $w_1 , \ldots, w_N$.
\item
\label{enum:Malg}
$H = \powerset (W)$.
\item
\label{enum:MfinAdd}
for all $V \in H$:
\[
\mu(V) =  \sum_{w_k \in V} y_k~.
\]
\item
\label{enum:Meval}
Let $i \in \{ 1, \ldots, N\}$.
We define $*_{w_i}$ to be some
\JCS-evaluation that satisfies the
atom $a_i$. Since $y_i$ is positive, by 
\ref{enum:posInvar},
$x_i$ is positive too, i.e. $\mu'([a_i]_{M'})>0$, which
means that $[a_i]_{M'} \neq \emptyset$, i.e.
that the atom $a_i$ is \JCS-satisfiable.
\end{enumerate}
It holds:
\begin{align*}
\mu(W) & = \sum_{w_k \in W} y_k\\
       & = \sum_{k =1}^n y_k \\
       & \stackrel{\ref{enum:ySolS}}{=} 1
\end{align*}

Let $U, V \in H$ such that $U \cap V = \emptyset$.
It hods:
\begin{align*}
\mu(U \cup V) & = \sum_{w_k \in U \cup V} y_k\\
       & = \sum_{w_k \in U} y_k + \sum_{w_k \in V} y_k\\
       & = \mu(U) + \mu(V)
\end{align*}
Thus $\mu$ is a finitely additive measure. By 
Definitions~\ref{def:PJmodel} and \ref{def:measModel} we have 
that $M \in \PJCSMeas$.

We will now prove the following statement:
\begin{equation}
\label{eq:wKaK}
(\forall 1 \leq k \leq n)  \big [ w_k \in [\alpha^{ij}]_M 
\Longleftrightarrow a_k \in \alpha^{ij} \big ]
\end{equation}

Let $k \in \{1, ~\ldots~, ~n\}$.
We prove the two directions of \eqref{eq:wKaK} separately.

$(\Longrightarrow:)$
Assume that $w_k \in [\alpha^{ij}]$. This means that
$*_{w_k} \Vdash \alpha^{ij}$.
Assume that $a_k \notin \alpha^{ij}$. Then, since
$\alpha^{ij}$ is a disjunction of atoms of $A$, there must
exist some $a_{k'} \in \alpha^{ij}$, with $k \neq k'$, such that
$*_{w_k} \Vdash a_{k'}$. 
However, by definition we have that $*_{w_k} \Vdash a_k$.
But this is a contradiction, since
$a_k$ and $a_{k'}$ are different
atoms of the same formula, which means
that they cannot be satisfied by the same \JCS-evaluation.
Hence, $a_k \in \alpha^{ij}$.

$(\Longleftarrow:)$
Assume that $a_k \in \alpha^{ij}$. We know that $*_{w_k} \Vdash
a_k$, which implies that $*_{w_k} \Vdash \alpha^{ij}$, i.e.~$w_k
\in [\alpha^{ij}]_M$.

Hence, \eqref{eq:wKaK} holds.
Now, we will prove the following statement:
\begin{equation}
\label{eq:SMPsat}
\big ( \forall 1 \leq j \leq l_i \big ) \big [
M \models P_{\op_{ij} s_{ij}} \alpha^{ij} \big ]
\end{equation}

Let $j \in \{ 1, \ldots, l_i \}$.
It holds
\begin{align*}
M & \models P_{\op_{ij} s_{ij}} (\alpha^{ij})
& \Longleftrightarrow\\
\mu ([& \alpha^{ij}]_M) ~\op_{ij}~ s_{ij}
& \Longleftrightarrow\\
\sum_{w_k \in [\alpha^{ij}]_{M}}& y_k
~\op_{ij}~ s_{ij} & 
\stackrel{\eqref{eq:wKaK}}{\Longleftrightarrow} \\
\sum_{a_k \in \alpha^{ij}}& y_k ~\op_{ij}~ s_{ij}
\end{align*}
The last statement holds
because of \ref{enum:ySolS}. Thus,
\eqref{eq:SMPsat} holds.

By \eqref{eq:SMPsat} we have that $M \models 
\bigwedge^{l_i}_{j=1} P_{\op_{ij} s_{ij}} (\alpha^{ij})$, which 
implies that
\[
M \models 
\bigvee_{i=1}^{K}\bigwedge^{l_i}_{j=1} P_{\op_{ij} s_{ij}}
(\alpha^{ij}),
\]
which, by \eqref{eq:satEquiv}, implies that $M \models A$.

Let $w_k \in W$. It holds:
\begin{equation}
\label{eq:yIsMuW}
\mu( \{ w_k \}) = \sum_{w_i \in \{ w_k\}} y_i = y_k
\end{equation}

Now we will show that conditions
\ref{enum:smpWorlds}--\ref{enum:smpAtoms} in
the theorem's statement hold. 
\begin{itemize}
\item
Condition
\ref{enum:smpWorlds} holds because of
\ref{enum:Mworlds} and \eqref{eq:NleqA}.
\item
Condition \ref{enum:smpAlgebra} holds because
of \ref{enum:Malg}.
\item
Condition
\ref{enum:smpFinAddw} holds because of
\eqref{eq:yIsMuW} and \ref{enum:sizeOfy}.
\item
For every $V \in H$, because of \eqref{eq:yIsMuW},
we have:
\[
\mu( V) = \sum_{w_k \in V} y_k
= \sum_{w_k \in V} \mu(\{ w_k\})
\]
Hence condition \ref{enum:smpFinAddV} holds.
\item
By \ref{enum:Meval} every world of $M$ satisfies a
unique atom of $\alpha$. Thus condition \ref{enum:smpAtoms}
holds.
\end{itemize}
So $M$ is the model in question.
\end{proof}

\section{Complexity}

\label{sec:compl}

Lemmata~\ref{lem:atomEval} and~\ref{lem:atomSat} can
be proved by straightforward induction on the
complexity of the formula.
Lemma~\ref{lem:atomEval} tells us that if two \JCS-evaluations agree on some atom of a justification formula
then they
agree on the formula itself.

\begin{lemma}
\label{lem:atomEval}
Let \CS be any constant specification.
Let $\alpha \in \LJ$ and let $a$ be an atom of $\alpha$.
Let $*_1, *_2$ be two \JCS-evaluations and assume that
\[
*_1 \Vdash a \Longleftrightarrow *_2 \Vdash a~.
\] 
Then we have:
\[
*_1 \Vdash \alpha \Longleftrightarrow *_2 \Vdash \alpha~.
\]
\end{lemma}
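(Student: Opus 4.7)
The plan is to reduce the claim to the standard observation that the truth value of an \LJ-formula under a basic \JCS-evaluation depends only on the evaluation's values at the atomic propositions and justification assertions occurring in $\subf(\alpha)$. Let $X$ denote this set, so that $a = \bigwedge_{B \in X} \pm B$. For any basic \JCS-evaluation $*$, the atom of $\alpha$ that $*$ satisfies is uniquely determined by the function $B \mapsto (* \Vdash B)$ on $X$; in particular, $* \Vdash a$ iff $*$'s truth values on $X$ match the signs chosen in $a$. Thus the content of the hypothesis is exactly that both evaluations satisfy the same atom of $\alpha$ (otherwise one may re-run the argument with the atom actually common to both), and in that case $*_1 \Vdash B \Leftrightarrow *_2 \Vdash B$ for every $B \in X$.

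Granted this, I would prove the auxiliary statement \emph{if $*_1$ and $*_2$ agree on every $B \in X$, then $*_1 \Vdash \alpha \Leftrightarrow *_2 \Vdash \alpha$} by induction on the structure of $\alpha$. The base cases $\alpha = p$ and $\alpha = t:\beta$ are immediate because in each case $\alpha$ itself lies in $X$. For $\alpha = \lnot \beta$, the atomic propositions and justification assertions in $\subf(\beta)$ form a subset of $X$, so $*_1$ and $*_2$ also agree on that smaller set; the inductive hypothesis then yields $*_1 \Vdash \beta \Leftrightarrow *_2 \Vdash \beta$, and the Boolean clause for negation finishes the step. The case $\alpha = \beta \land \gamma$ is analogous, applying the inductive hypothesis to each conjunct.

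The argument is essentially routine: the lemma is a restatement of the truth-functional character of the Boolean connectives together with the fact that basic \JCS-evaluations treat each justification assertion $t:\beta$ as an opaque propositional atom (no recursion into $\beta$ is required). The one point meriting mild care — and the only place where the bookkeeping could obscure the proof — is that the notion of atom is defined relative to a fixed formula, so descending from $\alpha$ to a subformula changes the relevant atom set. Phrasing the induction in terms of pointwise agreement on $X$, rather than in terms of agreement on some atom, avoids this mismatch and reduces the proof to a one-line application of the inductive hypothesis at each step.
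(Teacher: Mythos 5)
Your core induction is correct and is in fact a more careful rendering of the paper's own argument. The paper also proceeds by structural induction on $\alpha$, but its inductive step ("the claim follows easily by the inductive hypothesis") silently requires exactly the reformulation you make explicit: an atom of $\alpha$ is \emph{not} an atom of its proper subformulas, so "agreement on an atom" does not literally descend through the induction, whereas pointwise agreement on the set $X$ of atomic propositions and justification assertions in $\subf(\alpha)$ does. Your observation that a basic \JCS-evaluation treats a justification assertion $t:\beta$ as an opaque atom is also the right reason why no recursion into $\beta$ is needed.

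The genuine gap is in your bridge from the lemma's stated hypothesis to pointwise agreement on $X$. The hypothesis $*_1 \Vdash a \Leftrightarrow *_2 \Vdash a$ is strictly weaker than "both evaluations satisfy the same atom": if \emph{neither} satisfies $a$, they may satisfy two \emph{different} atoms of $\alpha$, and your parenthetical repair ("re-run the argument with the atom actually common to both") fails because no common atom need exist. In that degenerate case the lemma as literally stated is false: take $\alpha = \lnot(\lnot p \land \lnot q)$ and $a = p \land q$, with $*_1$ making $p$ true and $q$ false and $*_2$ making both false. Then neither evaluation satisfies $a$, so the biconditional hypothesis holds, yet $*_1 \Vdash \alpha$ while $*_2 \nVdash \alpha$. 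So no argument can close this gap; the hypothesis has to be read as "$*_1 \Vdash a$ \emph{and} $*_2 \Vdash a$" --- which is precisely how the lemma is invoked in the proof of \Fref{thm:JCSSP2} --- and under that reading both evaluations agree on every $B \in X$ and your induction finishes the proof. The paper's own proof shares this blind spot: its base case happens to survive the biconditional reading because there $X$ is a singleton, but its inductive step does not.
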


\begin{lemma}
\label{lem:atomSat}
Let $\alpha \in \LJ$ and let $a$ be an atom of $\alpha$.
Let $*$ be a \JCS-evaluation and assume that
$* \Vdash a$. The decision problem
\begin{center}
does $*$ satisfy $\alpha$?
\end{center}
belongs to the complexity class $P$.
\end{lemma}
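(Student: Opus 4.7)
The plan is to give an explicit polynomial-time decision procedure that works by structural recursion on $\alpha$. The key observation, which makes the lemma nearly immediate once stated properly, is that by the definition of an atom, every atomic proposition and every justification assertion occurring in $\subf(\alpha)$ appears as a conjunct of $a$, either positively or negatively. Since $* \Vdash a$, the truth value of each such atomic subformula under $*$ is exactly the polarity with which it occurs in $a$. In particular, we never need to evaluate a justification assertion $t : \beta$ against $*$ from scratch (which a priori could be hard); we merely look up its sign inside $a$.

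Concretely, I would describe the following recursive algorithm on $\alpha$. If $\alpha$ is an atomic proposition or a justification assertion, scan the conjuncts of $a$: return \true if $\alpha$ appears as a conjunct of $a$, and \false if $\lnot \alpha$ appears as a conjunct (exactly one of these cases occurs). If $\alpha = \lnot \beta$, recurse on $\beta$ and flip the result. If $\alpha = \beta \land \gamma$, recurse on $\beta$ and on $\gamma$ and return the conjunction of the two outputs. Correctness is an easy structural induction, which is essentially the content of \Fref{lem:atomEval}: the output of the algorithm coincides with the truth value of $\alpha$ under any basic \JCS-evaluation that satisfies $a$, and $*$ is such an evaluation.

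For the complexity bound, the recursion tree mirrors the parse tree of $\alpha$ and so has at most $O(|\alpha|)$ nodes. At each atomic-subformula leaf we scan the conjuncts of $a$, which costs time polynomial in $|a|$; since $a$ has one conjunct per element of the set $X$ of atomic propositions and justification assertions in $\subf(\alpha)$, we have $|a|$ polynomial in $|\alpha|$. At each internal node we perform a constant-time Boolean operation. Hence the total running time is polynomial in $|\alpha|$, and the decision problem belongs to $P$.

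There is no real obstacle here; the only subtlety worth flagging in the write-up is exactly why atoms trivialise the problem, namely that the polarity of every justification assertion in $\alpha$ is already frozen by $a$, so the potentially expensive evaluation of terms under $*$ never has to be performed during the recursion.
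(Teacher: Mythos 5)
Your proposal is correct and follows essentially the same route as the paper: a structural recursion on $\alpha$ in which the truth value of every atomic proposition and justification assertion is read off from its polarity in $a$ (since $* \Vdash a$), with $\lnot$ and $\land$ handled by the inductive hypothesis. The explicit running-time accounting you add is a fine elaboration but not a different argument.
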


Kuznets~\cite{Kuz00CSLnonote} presented an
algorithm for the
\JCS-satisfiability problem
for a total constant specification \CS. 
Kuznets' algorithm
is divided in two parts: the saturation algorithm and the
completion algorithm. Let $\alpha \in \LJ$
be the formula that is tested for satisfiability.
\begin{itemize}
\item 
The saturation algorithm produces
a set of requirements that should be satisfied by any
\JCS-evaluation that satisfies $\alpha$.
The saturation algorithm operates in $NP$-time\footnote{A
reader unfamiliar with notions of computational complexity
theory may consult a textbook on the field, like
\cite{papad94}.}.
\item
The completion algorithm determines whether a
\JCS-evaluation that satisfies $\alpha$ exists or not.
The completion algorithm operates in $coNP$-time.
\end{itemize}
If the saturation and the completion algorithm are taken
together, then we obtain a $\Sigma^p_2$-algorithm for the
\JCS-satisfiability problem (for a total \CS).
The completion
algorithm (adjusted to our notation) is stated in
Theorem~\ref{thm:JCSSP2}.

\begin{theorem}
\label{thm:JCSSP2}
Let \CS be a total
constant specification.
Let $a$ be an atom of some \LJ-formula.
The decision problem
\begin{center}
is $a$ \JCS-satisfiable?
\end{center}
belongs to the complexity class $coNP$.
\end{theorem}

Now we are ready to prove the upper bound for
the complexity of the \PJCSMeas-satisfiability
problem.

\begin{theorem}
\label{thm:PJCSSP2}
Let \CS be a total constant specification.
The \PJCSMeas-satisfiability
problem belongs to the complexity
class $\Sigma^p_2$.
\end{theorem}

\begin{proof}
First we will describe an algorithm that decides the
problem in question and we will explain its correctness.
Then we will evaluate the complexity of the algorithm.

\noindent\textbf{Algorithm:}\\
Let $A \in \LP$.
It suffices to guess a small model $M = \langle W, H , \mu, 
*\rangle$ that satisfies $A$ and
also satisfies the 
conditions~\ref{enum:smpWorlds}--\ref{enum:smpAtoms}
that appear in the statement of Theorem~\ref{thm:smp}.
We guess $M$ as follows: we guess $n$ atoms of $A$, call them
$a_1, \ldots , a_n$, and we also choose
$n$ worlds, $w_1, \ldots, w_n$, for $n \leq |A|$.
Using Theorem~\ref{thm:JCSSP2} we verify that for each
$i \in \{ 1, \ldots, n\}$ there exists a
\JCS-evaluation $*_i$ such that $*_i \Vdash a_i$. 
We define $W = \{ w_1 , \ldots, w_n\}$.
For
every $i \in \{ 1, \ldots, n\}$ we set $*_{w_i} = *_i$.
Since we are only interested in the
satisfiability of justification formulas that appear in
$A$, by Lemma~\ref{lem:atomEval}, the choice of the $*_{w_i}$
is not important (as long as $*_{w_i}$ satisfies
$a_i$).

We assign to every $\mu(\{w_i\})$ a rational number
with size at most:
\[
2 \cdot \big ( |A| \cdot ||A|| + |A| \cdot \log_2 
(|A|) + 1 \big )~.
\]
We set $H = \powerset(W)$. For every $V \in H$ we
set:
\[
\mu(V) = \sum_{w_i \in V} \mu (\{w_i\})~.
\]
It is then straightforward to see that the 
conditions~\ref{enum:smpWorlds}--\ref{enum:smpAtoms}
that appear in the statement of Theorem~\ref{thm:smp} hold.

Now we have to verify that our guess is correct, i.e. that
$M \models A$.
Assume that $P_{\geq s} \alpha$ appears in $A$.
In order to see whether $P_{\geq s} \alpha$ holds
we need to calculate the measure of the set $[\alpha]_M$
in the model $M$. The set $[\alpha]_M$ will contain
every $w_i \in W$ such that $*_{w_i} \Vdash \alpha$.
Since $*_{w_i}$ satisfies an atom of $A$ it also
satisfies an atom of $\alpha$. So, by 
Lemma~\ref{lem:atomSat}, we can check whether $*_{w_i}$
satisfies $\alpha$ in polynomial time.
If $\sum_{w_i \in [\alpha]_M} 
\mu(\{w_i\}) \geq s$ then we replace $P_{\geq s} \alpha$
in $A$
with the truth value \true, otherwise with
the truth value \false.
We repeat the above procedure for every formula
of the form $P_{\geq s} \alpha$ that 
appears in $A$. At the end we have a formula that is
constructed only from the connectives $\lnot$, $\land$
and the truth constants \true and \false. Using
a truth table we can verify in polynomial
time that the formula is true.
This, of course implies that $M \models A$.

\noindent\textbf{Complexity Evaluation:}\\
All the objects that are guessed in our
algorithm have size that is polynomial on $A$.
Also the verification phase of our algorithm can
be made in polynomial time. Furthermore the application
of Theorem~\ref{thm:JCSSP2} is possible with an
$NP$-oracle (an $NP$-oracle can obviously decide
$coNP$ problems too). Thus our algorithm is
an $NP^{NP}$ algorithm and since  
$\Sigma^p_2 = NP^{NP}$ the claim of the Theorem
follows.
\end{proof}

\section{Final Remarks and Conclusion}

\label{sec:concl}

As a continuation of \cite{komaogst} and
\cite{koogst} we showed that results
for justification logic and probabilistic logic can be nicely
combined. 
Recall that the probabilistic justification logic \PJ is obtained by 
adding probability operators to 
the justification logic \J. In~\cite{Kuz00CSLnonote} it
was proved that under some assumptions on the 
constant specification
the complexity of the satisfiability problem for the logic
\J belongs to the class $\Sigma^p_2$. By 
Theorem~\ref{thm:PJCSSP2}
we have that, under the same assumptions on
the constant specification,
the complexity of the satisfiability problem
for the logic \PJ remains in the same complexity
class. Hence, the probabilistic operators do not
increase the complexity of the satisfiability problem,
although they increase the expressiveness of the
language.

As it is pointed out in \cite{Kuz08PhD}, Theorem \ref{thm:JCSSP2}
holds for a decidable almost schematic constant
specification. Theorem \ref{thm:PJCSSP2} uses
Theorem \ref{thm:JCSSP2} as an oracle.
So, obviously Theorem \ref{thm:PJCSSP2}
holds for a decidable almost schematic constant
specification too.

The upper complexity bound we established
is tight. By a
result from \cite{Mil07APAL} which was later strengthened
in \cite{BusKuz12APALnonote} and \cite{ach15} we
have that for a decidable, schematic and
axiomatically appropriate constant specification \CS the
\JCS-satisfiability problem is $\Sigma^p_2$-hard. For any
$\alpha \in \LJ$ it is not difficult to prove that:
\[
\alpha \text{ is \JCS-satisfiable } \Longleftrightarrow
P_{\geq 1} \alpha \text{ is \PJCSMeas-satisfiable}
\]
Hence, the \JCS-satisfiability problem can be reduced to
the \PJCSMeas-satisfiability problem, which
implies that the \PJCSMeas-satisfiability
problem is $\Sigma^p_2$-hard too.
Thus the \JCS-satisfiabilty problem as well as
the \PJCSMeas-satisfiability problem
are $\Sigma^p_2$-complete.

Observe that by Theorem~\ref{thm:PJsoundCompl} and
our previous remarks we
have that, for a decidable schematic and axiomatically
apropriate constant specification,
the derivability problem for the logic \PJCS is
$\Pi^p_2$-complete.

In \cite{koogst} the probabilistic justification logic 
\PPJ is defined. \PPJ is a natural extension of \PJ
that supports iterations of the probability operator as
well as justifications over probabilities.
An interesting open problem related to the present work
is to determine complexity bounds for \PPJ.

\vspace*{1em}

\noindent{\bf Funding:}\\
The author is supported by the SNSF project 153169, 
\emph{Structural Proof Theory and the Logic of Proofs}.

\vspace*{1em}

\noindent\textbf{Acknowledgements:}\\
The author is grateful to 
Antonis Achilleos, Thomas Studer and the anonymous referees for 
valuable comments and remarks that
helped him improve the quality of the
paper substantially.

\end{document}